\pdfoutput=1
\documentclass[10pt,conference]{IEEEtran}
\IEEEoverridecommandlockouts
\usepackage{cite}
\usepackage{amsmath,amssymb,amsfonts,amsthm}
\usepackage{algorithmic}
\newtheorem{theorem}{Theorem}

\newtheorem{definition}{Definition}
\usepackage{graphicx}
\usepackage{textcomp}
\usepackage{xcolor}
\usepackage{booktabs}
\usepackage{multirow}
\usepackage{url}
\usepackage{hyperref}
\usepackage{caption}
\usepackage{eso-pic}
\hypersetup{hidelinks}
\def\BibTeX{{\rm B\kern-.05em{\sc i\kern-.025em b}\kern-.08em
    T\kern-.1667em\lower.7ex\hbox{E}\kern-.125emX}}
\begin{document}

\title{Measurement-Device Placement in MDI-QKD Networks with Minimal Trusted Relays}

\author{
\IEEEauthorblockN{Tianqu Luo}
\IEEEauthorblockA{
Politecnico di Milano, Italy \\
tianqu.luo@mail.polimi.it
}
\and
\IEEEauthorblockN{Ibrahim Arslantas}
\IEEEauthorblockA{
Politecnico di Milano, Italy \\
ibrahim.arslantas@mail.polimi.it
}
\and
\IEEEauthorblockN{Sezin Ozturk}
\IEEEauthorblockA{
Politecnico di Milano, Italy \\
sezin.ozturk@mail.polimi.it
} 
\and
\IEEEauthorblockN{Qiaolun Zhang}
\IEEEauthorblockA{
Politecnico di Milano, Italy \\
qiaolun.zhang@mail.polimi.it \\
\textit{Corresponding author}
}
\and
\IEEEauthorblockN{Alberto Sebastián-Lombraña}
\IEEEauthorblockA{
Universidad Politécnica de Madrid, Spain \\
Center for Computational Simulation, Madrid, Spain\\
aj.sebastian@upm.es
}
\and
\IEEEauthorblockN{Mehdi Bolourian}
\IEEEauthorblockA{
University of Waterloo, Canada \\
mbolourian@uwaterloo.ca
}
\and
\IEEEauthorblockN{Jiaheng Xiong}
\IEEEauthorblockA{
Politecnico di Milano, Italy \\
jiaheng.xiong@polimi.it
}
\and
\IEEEauthorblockN{Francesco Musumeci}
\IEEEauthorblockA{
Politecnico di Milano, Italy \\
francesco.musumeci@polimi.it
}
\and
\IEEEauthorblockN{Vicente Martín}
\IEEEauthorblockA{
Universidad Politécnica de Madrid, Spain \\
Center for Computational Simulation, Madrid, Spain\\
vicente.martin@upm.es
}
\and
\IEEEauthorblockN{Raouf Boutaba}
\IEEEauthorblockA{
University of Waterloo, Canada \\
rboutaba@uwaterloo.ca
}
\and
\IEEEauthorblockN{Massimo Tornatore}
\IEEEauthorblockA{
Politecnico di Milano, Italy \\
massimo.tornatore@polimi.it
}
}
\maketitle
\AddToShipoutPictureFG*{\AtPageLowerLeft{\put(0,\LenToUnit{0.42in}){\makebox[\paperwidth]{\parbox{\textwidth}{\centering \copyright~Owner/Author, IEEE. This is the author's version of the work. It is posted here for your personal use. Not for redistribution. The definitive Version of Record was accepted by IEEE QCE 2026.}}}}}

\begin{abstract}

 Measurement-Device-Independent Quantum Key Distribution (MDI-QKD) removes detector side-channel vulnerabilities by delegating measurements to an untrusted relay that, when shared by several user pairs, acts as a Bell-state measurement (BSM) hub. Channel loss still limits the reach of MDI-QKD links, so long-range services rely on trusted relays for key forwarding. As MDI-QKD moves toward metropolitan deployment, a key network-planning question arises: how should BSM hubs be placed on existing fiber infrastructure to minimize the use of trusted relays?
 In this work, we formalize this challenge as the MDI-QKD Hub Deployment (MHD) problem. We first prove that the MHD problem is NP-hard, and then formulate it as a Mixed-Integer Linear Programming (MILP). To the best of our knowledge, this is the first formulation for MDI-QKD network planning that captures the structural features specific to MDI-QKD, namely a shared BSM hub, two-link user-to-hub routing, and loss-balancing constraints, while jointly determining hub placement, user-to-hub assignment, and trusted-relay demand allocation. Our solution accounts for realistic budget and capacity constraints, including practical insights from real-world deployments of commercial MDI-QKD solutions. Numerical evaluations on three metropolitan topologies (12 to 50 nodes) at realistic geographic distances show that topology structure governs trusted-relay demand: hub placement alone eliminates all trusted relays in compact urban meshes, while in sparse topologies, relaxing the loss-balancing constraint removes the dominant source of trusted-relay usage at no additional infrastructure cost. The key-rate threshold at which trusted relays first appear shifts monotonically with the network diameter, i.e., the largest shortest-path distance between any node pair in fiber kilometers, delineating the feasibility boundary of pure MDI-QKD deployment.
\end{abstract}

\begin{IEEEkeywords}
Measurement-device-independent QKD, trusted relay, mixed-integer linear programming, network planning
\end{IEEEkeywords}

\section{Introduction}
\label{sec:intro}

Quantum key distribution (QKD) can provide information-theoretically secure key exchange~\cite{Bennett1984, Lo2014, Pirandola2017}, but its transmission range is limited by channel losses~\cite{Pirandola2017}. To extend QKD services over metropolitan and backbone networks, current deployments rely on trusted relay nodes that forward keys hop by hop~\cite{qkd_a_networking}, at the cost of introducing additional security assumptions at each relay.

The need to extend QKD services beyond point-to-point links has motivated substantial effort toward integrating QKD into optical fiber infrastructures, including studies on resource allocation, routing, and key-rate optimization in networked settings~\cite{Zhao2018, Zhang2024RCKTA, xiong2025power}. Over the past two decades, QKD technology has progressed from laboratory experiments to deployed metropolitan and backbone networks, confirming its practical viability for securing critical communications. Currently, several efforts for the deployment of quantum network infrastructures to support QKD-enabled secure communications are ongoing, such as those related to the EuroQCI initiative~\cite{web_euroqci, euroqci_QUID_Italia, euroqci_spain}.

Nevertheless, the information-theoretic security of QKD assumes that detectors behave as ideal single-photon counters. Real-world detectors deviate from this model through imperfections such as efficiency mismatch, dead time, and afterpulsing~\cite{Lo2012}, and they remain vulnerable to active attacks such as detector blinding~\cite{Lydersen2010}, all of which open side channels that can completely compromise security. Measurement-Device-Independent QKD (MDI-QKD)~\cite{Lo2012, Braunstein2012} overcomes this vulnerability by delegating all measurements to an untrusted relay performing Bell-state measurements (BSMs), thereby removing detector side channels. Since its introduction, MDI-QKD has been experimentally demonstrated over long fiber distances~\cite{Tang2014} and in metropolitan-scale optical networks~\cite{Tang2016}, including coexistence with classical data traffic~\cite{Berrevoets2022}. Some commercial MDI-QKD systems, notably those by Q*Bird~\cite{Berrevoets2022, qbird_web}, employ a shared measurement relay that can serve multiple user pairs, effectively functioning as a central hub.

These advances naturally raise a network design question: 

\emph{how should BSM hubs be placed on existing fiber infrastructure to minimize the use of trusted relays while delivering long-range QKD services?}
This can be viewed as a hub location problem~\cite{Alumur2008, OKelly1987}, where each hub denotes a shared BSM relay serving multiple user pairs, but with several constraints specific to MDI-QKD. First, communication between two users is realized via a \emph{two-link connection to a shared relay}, i.e., each user independently establishes a quantum channel to the same BSM hub. Second, the achievable secret key rate decays rapidly with channel loss, which effectively limits the maximum transmission distance. Third, practical MDI-QKD performance is sensitive to imbalance between the two paths from the user pair to the BSM hub. For a given user pair (conventionally referred to as Alice and Bob), significant mismatch between the two optical paths to the same BSM hub degrades two-photon interference (i.e., the Hong-Ou-Mandel interference~\cite{hong1987measurement} that enables BSM) and reduces the achievable key rate. A common approach is therefore to compensate this asymmetry by adding attenuation to the lower-loss path, effectively enforcing balanced end-to-end losses. 

In this work, we adopt this symmetric approach~\cite{Tang2016, Berrevoets2022} and treat loss equalization as a deployment constraint: a user pair can be assigned to a BSM hub only if the two paths from the users to that hub can be loss-balanced, by attenuating the lower-loss path, without pushing the effective distance beyond the key-rate threshold. Because balancing always raises both channels to the higher-loss level, hub placements with highly asymmetric user-to-hub paths may become infeasible even when each individual channel is within range. This assumption captures standard experimental practice and enables a tractable optimization formulation, though it does not cover asymmetric MDI-QKD protocols~\cite{Wang2019, Liu2019}.

As interest in MDI-QKD network deployment grows, a network planning model that captures realistic deployment constraints becomes essential to optimize both infrastructure placement and ongoing service provision. Prior studies on MDI-QKD deployment rely on simplified system models: Cao et al.~\cite{Cao2021} select relay types along predetermined paths without jointly modeling hub placement and routing, while Jia et al.~\cite{Jia2023} and He et al.~\cite{He2025} employ heuristic frameworks that do not model the two-link user-to-hub routing structure or realistic budget and capacity constraints. On the other hand, ILP and MILP formulations for adjacent quantum-network problems, such as repeater placement~\cite{Rabbie2022} and resource placement~\cite{Pouryousef2024}, do not capture the constraints specific to MDI-QKD, namely a two-link connection to a shared BSM hub, distance-dependent key-rate decay, and loss-balancing constraints on the two user-to-hub paths. As a result, no existing framework jointly addresses hub placement, two-link user-to-hub routing, and trusted-relay assignment under realistic deployment constraints.

The contributions of this study can be summarized as follows.
\begin{itemize}
    \item \textbf{Problem formulation and complexity.} We introduce the MDI-QKD Hub Deployment (MHD) problem, which jointly determines the placement of BSM hubs in the fiber-based QKD network, the assignment of user pairs to those hubs, and the use of trusted-relay paths when direct MDI-QKD is infeasible, so as to satisfy the expected service demand under operational conditions. We show that the MHD problem is NP-hard.    
    
    \item \textbf{MILP-based optimal design.} We formulate the MHD problem as a Mixed-Integer Linear Programming (MILP) model that explicitly captures the two-link connection to a shared relay of MDI-QKD connections under loss-balancing constraints. Channel asymmetry is handled via a loss-balancing constraint based on passive attenuation, yielding a tractable yet physically grounded approximation. The resulting formulation can be solved to optimality for metropolitan-scale deployments.
    
    \item \textbf{Network-level insights.} Through simulations on three metropolitan topologies (12 to 50 nodes) at realistic geographic distances under three candidate placement strategies of increasing density, we show that topology structure governs trusted-relay demand: compact meshes eliminate all trusted relays through hub placement alone, sparse topologies benefit substantially from relaxing the loss-balancing constraint, and the key-rate threshold at which trusted relays first appear shifts monotonically with network diameter, delineating explicit feasibility boundaries for deployment planning.
\end{itemize}

\section{Related Work}
\label{sec:related}

Metropolitan and backbone QKD networks have been deployed in multiple continents, from the DARPA testbed in Boston~\cite{boston}, the SECOQC network in Vienna~\cite{Peev2009} and the Tokyo QKD network~\cite{Sasaki2011} to the 2,000\,km Beijing--Shanghai backbone~\cite{Chen2021}. Although these deployments validated QKD in real-world settings, they relied on trusted intermediate nodes, each of which must be assumed secure since a compromise exposes the keys it relays. MDI-QKD eliminates detector side-channel vulnerabilities by delegating measurements to an untrusted relay; recent deployments include a metropolitan MDI-QKD network~\cite{Tang2016} and the Madrid quantum network~\cite{madrid_2025_advancing}, in both cases using a hub-based approach.

A growing body of research is now focusing on how to efficiently allocate resources in QKD networks through mathematical optimization. ILP and MILP formulations have been proposed for quantum-repeater placement~\cite{Rabbie2022}, resource placement~\cite{Pouryousef2024}, and joint routing and key-rate assignment~\cite{Zhang2024RCKTA}, while other studies have investigated key management, resource allocation, and finite-key security analyses~\cite{Zhang2023ICQKD, Cirigliano2024, Amer2020, Krawec2024, Marik2025}. These formulations target relay-chain architectures, whether trusted-node key forwarding or quantum-repeater entanglement distribution, and thus do not apply to MDI-QKD, where each user pair connects to a shared BSM hub via two independent links.

The few prior studies that target MDI-QKD-specific deployment~\cite{Cao2021, Jia2023, He2025} address only restricted subproblems and do not jointly model hub placement, two-link user-to-hub routing, and trusted-relay assignment under realistic constraints.

\section{System Model and Problem Statement}
\label{sec:model}

\subsection{Network Model}
\label{sec:network_model}

We model a metropolitan fiber-based QKD network as a directed graph $G = (V, A)$, where $V$ is the set of nodes and $A$ is the set of directed arcs representing fiber links, each with physical length $\ell_a$ in kilometers. Each physical fiber link is represented by a pair of opposite directed arcs, hence the arc-set notation $A$ rather than an undirected edge set. A set $\mathcal{C}$ denotes \emph{candidate hub locations} where Bell-state measurement (BSM) equipment may be installed.

\begin{figure*}[t]
\centering
\includegraphics[width=0.9\textwidth]{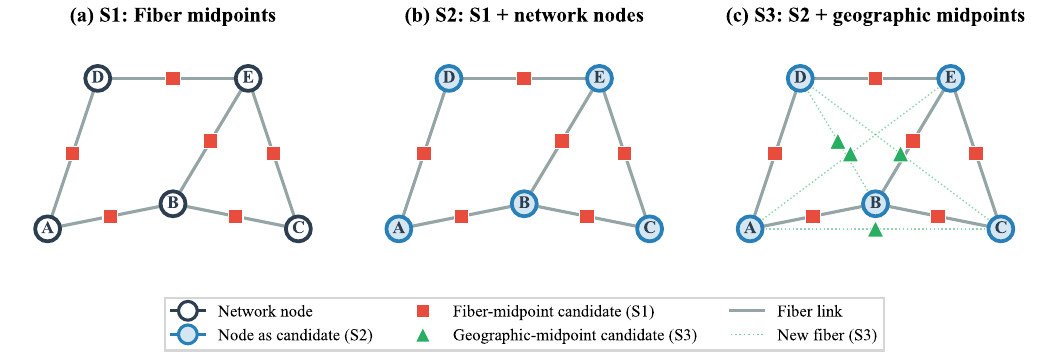}
\caption{Candidate hub locations under three placement strategies
on a five-node example topology.}
\label{fig:candidate_strategies}
\end{figure*}

The set $\mathcal{C}$ is determined by a \emph{candidate selection policy} reflecting deployment constraints:

\begin{itemize}
    \item \textbf{S1 (Fiber Midpoints):} Candidates are placed at midpoints of existing fiber links (Fig.~\ref{fig:candidate_strategies}(a)), reusing splice points or amplifier sites without new facility construction. Under homogeneous fiber, such locations tend to reduce channel-loss asymmetry.

    \item \textbf{S2 (Fiber Midpoints + Network Nodes):} Candidates include all S1 locations plus all network nodes $V$ (Fig.~\ref{fig:candidate_strategies}(b)), additionally leveraging existing switching facilities for co-locating BSM equipment.

    \item \textbf{S3 (S2 + Geographic Midpoints):} Candidates include all S2 locations plus the geographic midpoint of every non-adjacent node pair (up to $\binom{|V|}{2}$ additional locations), subject to a minimum separation of 2\,km (Fig.~\ref{fig:candidate_strategies}(c)); reaching these sites requires new fiber whose cost is not captured by~$B$. 
    Since the candidate hub locations $\mathcal{C}$ under S3 include all the candidate hub locations under S1 and S2, 
    S3 provides a lower bound on achievable trusted-relay usage and helps distinguish \emph{topology-limited} requests from \emph{distance-limited} ones; it should therefore be read as a theoretical reference rather than a deployment option on an equal cost footing with S1 and S2.
\end{itemize}

A set of \emph{key distribution requests} $\mathcal{R} = \{(s_r, d_r)\}_{r=1}^{|\mathcal{R}|}$ specifies source--destination pairs requiring quantum-distributed keys. Each request must be served either via an MDI-QKD connection through a BSM hub or, when such service is not feasible, via trusted-relay forwarding through existing network nodes under different security assumptions. These two options differ fundamentally in their trust model. A BSM hub is \emph{untrusted}: it only performs the Bell-state measurement and publicly announces the outcome, learning nothing about the secret key, so compromising a hub does not expose key material. A trusted relay, in contrast, receives, stores, and re-forwards key bits and must therefore be \emph{trusted}, since its compromise reveals every key it relays. We treat BSM hubs as intermediate measurement stations distinct from the communicating endpoints: a network node may host a BSM hub, but a user pair is never served by a hub co-located with either of its own endpoints. Since trusted-relay operation introduces additional trust assumptions at the relay nodes, the primary optimization objective is to minimize the number of requests served with trusted relays (TR).
Deploying BSM equipment at a candidate location~$h$ incurs a fixed installation cost~$c_h$, and routing a user pair through an active hub incurs an additional per-connection cost~$c_u$. The aggregate deployment expenditure is bounded by a budget~$B$. Each BSM unit can serve at most $K_\text{max}$ concurrent requests; deploying multiple units at the same site increases its capacity proportionally.

\subsection{Key-Rate Model and Deployment Assumptions}
\label{sec:keyrate}

In an MDI-QKD link, each of the two users (defined as Alice and Bob) acts as a \emph{transmitter}, independently sending prepared quantum states through a dedicated fiber channel to a shared BSM hub whose detection apparatus serves as the \emph{receiver}. The secret key rate is governed by the optical losses in these two transmitter-to-receiver channels. Under the decoy-state MDI-QKD protocol~\cite{Lo2012}, the asymptotic key rate per pulse is
\begin{equation}
    R \geq Q_{11}^Z \bigl[1 - H(e_{11}^X)\bigr] - Q_{\mu\mu}^Z\, f_e\, H(E_{\mu\mu}^Z),
    \label{eq:keyrate_physics}
\end{equation}
where $Q_{11}^Z$ and $e_{11}^X$ denote the single-photon gain and phase-error rate estimated via decoy-state analysis, $Q_{\mu\mu}^Z$ and $E_{\mu\mu}^Z$ are the overall gain and QBER, $H(\cdot)$ is the binary entropy function, and $f_e \approx 1.16$ is the error-correction inefficiency.

In standard fibers with attenuation $\alpha_\text{dB} \approx 0.2$\,dB/km, channel transmittance decays exponentially with distance. In MDI-QKD, the secret-key rate is highly sensitive to imbalance between the two user-to-BSM channels, because large asymmetry degrades the effective two-photon interference conditions at the relay and thereby limits performance. Consequently, BSM-hub placement is a critical design variable, as it directly determines the degree of channel-loss asymmetry.

For tractability, rather than evaluating~\eqref{eq:keyrate_physics}
for each candidate configuration, we
follow~\cite{Berrevoets2022} and model the key rate $\kappa$ as a
function of an effective distance $d_\text{eff}$:
\begin{equation}
    \kappa = f(d_\text{eff}), \quad \text{viable iff } \kappa \geq \kappa_\text{min},
    \label{eq:keyrate}
\end{equation}
where $f(\cdot)$ is implemented via interpolation over experimentally reported data points (50--250\,km, 5\,km spacing), links beyond the maximum tabulated reach $d_\text{max}$ are treated as unreachable, and $\kappa_\text{min}$ is a minimum QoS threshold. Representative values are shown in Table~\ref{tab:keyrate}. The exponential decay reflects fiber attenuation at 1550\,nm.

\begin{table}[t]
\centering
\caption{MDI-QKD key rate as a function of total channel loss, expressed as effective distance $d_\text{eff}$~\cite{Berrevoets2022, qbird_web}.}
\label{tab:keyrate}
\begin{tabular}{lcccccc}
\toprule
$d_\text{eff}$ (km) & 50 & 100 & 150 & 200 & 225 & 250 \\
\midrule
$\kappa$ (bps) & 1300 & 110 & 8.5 & 0.71 & 0.20 & 0.02 \\
\bottomrule
\end{tabular}
\end{table}

Throughout this work, path asymmetry is modeled only through the dominant loss mechanisms in fiber-based MDI-QKD networks: fiber attenuation and node-bypass loss. Connector losses, splice losses, and other intermediate-node component losses are neglected; $\alpha_\text{dB}$ could be easily increased to include the cumulative effect of these losses. For user $i \in \{A,B\}$, let $d_i$ denote the fiber length from the user to the BSM hub, and let $N_i$ denote the number of intermediate nodes traversed along that path. The corresponding channel loss is modeled as $L_i = d_i \, \alpha_\text{dB} + N_i \, b_l$ (in dB), where $\alpha_\text{dB}$ is the fiber attenuation coefficient and $b_l$ is the bypass loss incurred at each intermediate node.
Based on these channel losses, we define the effective distance as
\begin{equation}
    d_\text{eff} = \frac{2}{\alpha_\text{dB}} \max(L_A, L_B),
    \label{eq:deff_bypass}
\end{equation}
which maps the larger of the two user-to-BSM channel losses to an equivalent end-to-end fiber distance, with the division by $\alpha_\text{dB}$ (in dB/km) converting the loss in dB back into kilometers. Under loss-balanced operation, this expression reduces to the total equivalent distance across the two user-to-BSM paths.

\noindent\textbf{Deployment options.}
We consider two deployment options that differ in how channel-loss asymmetry is handled.

\textit{Loss-compensated deployment.}
This deployment option assumes that channel-loss asymmetry can be actively mitigated (e.g., via optical attenuation or protocol-level compensation), so that the two-link connections from the user pair to the BSM hub operate under effectively balanced loss. We enforce $|L_A - L_B| \leq 2\tau \cdot \alpha_\text{dB}$, which, under the homogeneous fiber assumption, is equivalent to $|d_A - d_B| \leq 2\tau$. We set $\tau = 3$\,km, corresponding to at most $1.2$\,dB of residual loss imbalance. This is consistent with typical optical penalty budgets and is conservative relative to prior theoretical analyses~\cite{Avis2024} and experimental demonstrations~\cite{Tang2016} showing robust MDI-QKD operation under significantly larger asymmetries. A sensitivity sweep over $\tau$ values between 1 and 10\,km across the evaluated instances shows the expected monotone behavior, i.e., tightening the window increases trusted-relay usage while widening it gradually relaxes the deployment toward the loss-uncompensated model, with the effect concentrated in sparse topologies.

\textit{Loss-uncompensated deployment.}
In this model, no loss-balancing is enforced: any BSM hub reachable within the key-rate threshold is admissible, regardless of the imbalance between the two user-to-hub paths. We retain the same key-rate model and evaluate the rate conservatively at the worse of the two channels through the effective distance $d_\text{eff} = (2/\alpha_\text{dB})\max(L_A, L_B)$ of~\eqref{eq:deff_bypass}, which over-estimates the effective loss whenever the two paths differ. The resulting key rate is therefore a lower bound and the trusted-relay counts are conservative. The two deployment options thus differ only in candidate admissibility; neither models an asymmetric MDI-QKD protocol.

\subsection{Problem Statement}
\label{sec:problem}

\begin{definition}[MDI-QKD Hub Deployment (MHD) Problem]
\label{def:mhdp}
The MHD problem can be formally stated as follows: given a fiber-based QKD network $G=(V,A)$, candidate hub locations $\mathcal{C}$, key distribution requests $\mathcal{R}$, deployment budget $B$, per-hub capacity $K_\text{max}$, and a key-rate function $f(\cdot)$ with minimum threshold $\kappa_\text{min}$, determine a hub deployment $\mathcal{H} \subseteq \mathcal{C}$ (with multiplicities) and a routing assignment $\phi: \mathcal{R} \to \bigcup_r \mathcal{P}_r$, where $\mathcal{P}_r$ is the set of candidate serving paths for request~$r$ (defined formally in Section~\ref{sec:variables}), that minimizes the number of requests served via trusted relays, i.e., $|\{r \in \mathcal{R} : \phi(r) \text{ is a trusted-relay path}\}|$, subject to budget, hub capacity, arc capacity, and key-rate constraints. The MHD problem simultaneously decides \emph{where} to deploy BSM equipment, \emph{how} to route the quantum path across each user pair, and \emph{which} pairs must use trusted relays. This joint optimization distinguishes it from sequential approaches that first fix hub locations and then optimize routing~\cite{Jia2023, He2025}. We address the \emph{static planning} phase, in which hub placement and the user-to-hub routing it must support are decided jointly and offline, prior to deployment; dynamic, per-session key routing over an already-provisioned infrastructure operates at a different time scale and is outside the scope of this work.
\end{definition}

\section{MILP Formulation}
\label{sec:milp}

\subsection{Decision Variables and Parameters}
\label{sec:variables}

Following the model in Section~\ref{sec:model}, the candidate path set for each request~$r$ is
\begin{equation}
    \mathcal{P}_r = \{p_{r,h} \mid h \in \mathcal{C},\; \kappa(d_\text{eff}^{r,h}) \geq \kappa_\text{min}\} \cup \{p_r^\text{TR}\},
    \label{eq:pathset}
\end{equation}
where $p_{r,h}$ denotes the two-link MDI-QKD path through hub~$h$ and $p_r^\text{TR}$ the path that uses trusted relay.

We define the following decision variables:
\begin{itemize}
    \item $Y_h \in \mathbb{Z}_{\geq 0}$: number of BSM units deployed at candidate location $h \in \mathcal{C}$. Multiple units at a single site increase its concurrent request-serving capacity, modeling scenarios where a facility hosts several BSM receivers.
    \item $X_{r,p} \in \{0,1\}$: equals~1 if request $r \in \mathcal{R}$ is assigned to path $p \in \mathcal{P}_r$, and 0 otherwise.
\end{itemize}

The model uses the following input parameters:
\begin{itemize}
    \item $\delta^\text{Hub}_{p,h} \in \{0,1\}$: equals~1 if path~$p$ uses hub~$h$ for BSM.
    \item $\delta^\text{TR}_{p} \in \{0,1\}$: equals~1 if path~$p$ is a trusted-relay path.
    \item $c_h$: fixed cost for deploying BSM equipment at hub~$h$.
    \item $c_u$: per-connection cost for each user pair routed through a hub.
    \item $n_p^\text{ch}$: number of quantum channels consumed by path~$p$.
    \item $K_\text{max}$: maximum number of requests a single BSM unit can serve.
    \item $B$: upper bound on aggregate hub deployment and connection cost.
\end{itemize}

\subsection{Objective Function}
\label{sec:objective}

MDI-QKD network planning involves three competing objectives: minimizing the number of requests that are served with trusted relays (security), minimizing deployment cost (economics), and minimizing quantum channel usage (resource efficiency). We combine these into a single weighted objective with lexicographic-style priorities:
\begin{align}
    \min \quad & \alpha \!\sum_{r \in \mathcal{R}} \sum_{p \in \mathcal{P}_r} \!\delta^\text{TR}_{p} X_{r,p} \notag \\
    & + \beta \!\left(\sum_{h \in \mathcal{C}} c_h Y_h + c_u \!\sum_{r,p} n_p^\text{hub} X_{r,p}\right) \notag \\
    & + \gamma \sum_{r \in \mathcal{R}} \sum_{p \in \mathcal{P}_r} n_p^\text{ch} X_{r,p}, \label{eq:obj}
\end{align}
where $n_p^\text{hub} = \sum_{h \in \mathcal{C}} \delta^\text{Hub}_{p,h}$ is the number of BSM hubs used by path~$p$.

The weight hierarchy $\alpha = 1000 \gg \beta = 1 \gg \gamma = 0.001$ enforces strict priority: security (TR minimization) is non-negotiable, cost is secondary, and channel count is a tiebreaker. This choice makes the weighted objective equivalent to a strict lexicographic order (TR~$\succ$~cost~$\succ$~channels) refining the trusted-relay minimization objective of Definition~\ref{def:mhdp}: the cost term is bounded by $\beta B = B < \alpha$ and the channel term satisfies $\gamma \sum_{r,p} n_p^\text{ch} X_{r,p} < 1$ across all evaluated instances, so the second and third terms together can never offset a single-unit change in the trusted-relay count.

\subsection{Constraints}
\label{sec:constraints}

\textbf{Connectivity.} Every request is served by exactly one path:
\begin{equation}
    \sum_{p \in \mathcal{P}_r} X_{r,p} = 1, \quad \forall\, r \in \mathcal{R}.
    \label{eq:serve}
\end{equation}

\textbf{Hub activation.} A path using hub $h$ requires deployed BSM equipment:
\begin{equation}
    X_{r,p} \leq Y_h, \quad \forall\, r,\, p,\, h \;:\; \delta^\text{Hub}_{p,h}\!=\!1.
    \label{eq:activate}
\end{equation}
This per-path formulation is tighter than big-$M$ alternatives~\cite{Alumur2008}, producing a stronger LP relaxation.

\textbf{Hub capacity.} Each BSM unit serves at most $K_\text{max}$ requests:
\begin{equation}
    \sum_{r \in \mathcal{R}} \sum_{\substack{p \in \mathcal{P}_r \\ \delta^\text{Hub}_{p,h}=1}} \!\! X_{r,p} \leq K_\text{max} \cdot Y_h, \quad \forall\, h \in \mathcal{C}.
    \label{eq:capacity}
\end{equation}

\textbf{Budget limit.}
\begin{equation}
    \sum_{h \in \mathcal{C}} c_h Y_h + c_u \sum_{r,p} n_p^\text{hub} X_{r,p} \leq B.
    \label{eq:budget}
\end{equation}

\textbf{Arc capacity.} Each directed fiber arc carries at most $W$ quantum channels:
\begin{equation}
    \sum_{r \in \mathcal{R}} \sum_{\substack{p \in \mathcal{P}_r \\ \delta^\text{Arc}_{p,a}=1}} X_{r,p} \leq W, \quad \forall\, a \in A,
    \label{eq:arccap}
\end{equation}
where $\delta^\text{Arc}_{p,a} = 1$ if path~$p$ uses arc~$a$. This prevents over-subscription of quantum channels on shared fiber links.

\textbf{Key-rate prefiltering.} Paths with $\kappa(d_\text{eff}^p) < \kappa_\text{min}$ are excluded during path generation, keeping the MILP constraint matrix sparse.

\subsection{NP-hardness of the MHD Problem}
\label{sec:complexity}

We establish the computational hardness of the MHD problem by reduction from the Set Cover problem, a classical NP-hard problem~\cite{Karp1972}.

\begin{theorem}[Computational Hardness]
\label{thm:hard}
The decision version of the MHD problem is NP-complete: given a deployment budget~$B$, it is NP-complete to determine whether there exists a feasible hub deployment under which all requests are served via MDI-QKD paths, i.e., no request needs a trusted relay. 

\end{theorem}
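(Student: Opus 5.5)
We prove membership in NP and then NP-hardness by a polynomial-time reduction from Set Cover, as announced at the start of Section~\ref{sec:complexity}.

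\textbf{Membership in NP.} A certificate consists of the multiplicities $Y_h$ and an assignment of each request to one MDI-QKD path $p_{r,h} \in \mathcal{P}_r$. The paths $p_{r,h}$, together with their key-rate admissibility, arcs and hub indicators, are precomputed in polynomial time by shortest-path computations and loss-balancing checks. Each $Y_h$ can be capped at $|\mathcal{R}|$ without loss of generality, so the certificate has polynomial size. Constraints \eqref{eq:serve}--\eqref{eq:arccap} can then be checked directly in polynomial time.

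\textbf{Reduction.} Take a Set Cover instance with universe $U=\{u_1,\dots,u_n\}$, family $\mathcal{S}=\{S_1,\dots,S_m\}$, and integer $k$. We construct an MHD instance as follows.
\begin{itemize}
\item \emph{Requests.} For each element $u_i$ create one request $r_i=(a_i,b_i)$ with two fresh endpoint nodes.
\item \emph{Hubs.} For each set $S_j$ create a candidate hub node $h_j \in \mathcal{C}$.
\item \emph{Fibers.} Whenever $u_i \in S_j$, add fibers $a_i$--$h_j$ and $b_i$--$h_j$, both of the same length $\ell$. Choose $\ell$ so that $d_\text{eff}=2\ell$ meets $\kappa_\text{min}$. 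Each user path to $h_j$ is then a single arc with $N_i=0$ bypass nodes, and it is perfectly balanced, so it is admissible under either deployment option.
\item \emph{Excluding other hubs.} Every other route from $a_i$ to some $h_{j'}$ has length at least $3\ell$, because it must pass through another hub and another endpoint. We pick $\ell$ with $2\cdot 3\ell > d_\text{max}$ (for instance $\ell$ just below $d_\text{max}/2$), or equivalently make the bypass loss $b_l$ large, so those routes are prefiltered by \eqref{eq:pathset}. Hence $r_i$ has an admissible MDI-QKD path through $h_j$ if and only if $u_i\in S_j$. If the paper's path generation uses only shortest user-to-hub paths, this holds automatically.
\item \emph{Parameters.} Set $c_h=1$, $c_u=0$, $K_\text{max}=n$, $W=n$ (non-binding), and $B=k$.
\end{itemize}

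\textbf{Correctness.} Suppose there is a cover of size at most $k$. Deploy one unit at each $h_j$ in the cover and route each $r_i$ through a chosen covering hub. Capacity holds since $K_\text{max}=n$, and the budget holds since the cost is at most $k$. Conversely, suppose a feasible deployment serves every request without trusted relays. By \eqref{eq:activate}, each $r_i$ uses some hub $h_j$ with $Y_j\ge1$ and $u_i\in S_j$. The set $\{S_j : Y_j\ge 1\}$ therefore covers $U$, and it has size at most $\sum_h Y_h\le B=k$. The construction is clearly polynomial.

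\textbf{Main obstacle.} The delicate step is ensuring that no unintended MDI-QKD path is admissible, that is, no detour reaching a hub whose set does not contain $u_i$, and no hub co-located with an endpoint. The endpoint case is excluded because hubs are fresh nodes. Detours are handled by the distance gap between $2\ell$ and $6\ell$ relative to the tabulated reach $d_\text{max}$ in Table~\ref{tab:keyrate}, with the bypass loss $b_l$ as a backup lever. I would first verify this gap argument against the exact definition of $d_\text{eff}$ in \eqref{eq:deff_bypass}.
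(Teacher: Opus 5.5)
Your proof is correct and follows essentially the same route as the paper. Both reduce from Set Cover, mapping elements to requests and sets to candidate hubs, with $c_h=1$, $c_u=0$, $B=k$, non-binding hub and arc capacities, and a distance threshold that makes $r_i$ servable through $h_j$ exactly when $u_i\in S_j$. Your explicit incidence-graph construction (direct fibers of length $\ell$, every detour at least $3\ell$ per side so $d_\text{eff}\ge 6\ell$) and the cap $Y_h\le|\mathcal{R}|$ for a polynomial-size certificate make precise what the paper only sketches with its $L_1$/$L_2$ placement. One small correction: shortest-path generation does not make the exclusion automatic, since the shortest route to a non-covering hub can still be a $3\ell$ detour; it is the $2\ell$ versus $6\ell$ gap that rules out those hubs.
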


\begin{proof}
\textit{The MHD problem is in NP.} Given a candidate hub deployment and routing assignment, feasibility can be verified in polynomial time by checking path assignment, hub activation, capacity, budget constraints, and counting the number of trusted-relay requests.

\textit{NP-hardness.} We reduce from the Set Cover problem~\cite{Karp1972}, defined as follows: given a universe $U = \{u_1, \ldots, u_m\}$, a collection of subsets $\mathcal{S} = \{S_1, \ldots, S_n\} \subseteq 2^U$, and an integer~$k$, determine whether there exists a sub-collection $I \subseteq \mathcal{S}$ with $|I| \leq k$ such that $\bigcup_{S \in I} S = U$.

Given a Set Cover instance $(U, \mathcal{S}, k)$, we construct an MHD problem instance as follows. Each element $u_i \in U$ is mapped to a request $r_i \in \mathcal{R}$. Each subset $S_j \in \mathcal{S}$ is mapped to a candidate hub $h_j \in \mathcal{C}$. Using a standard distance-threshold construction, the network topology and key-rate parameters can be constructed in polynomial time so that request $r_i$ is reachable via hub $h_j$ if and only if $u_i \in S_j$ (e.g., place $h_j$ at distance $L_1$ from both endpoints of $r_i$ if $u_i \in S_j$ and at distance $L_2 > L_1$ otherwise, with $\kappa_\text{min}$ set between $\kappa(2L_2)$ and $\kappa(2L_1)$). Hub costs are set to unity ($c_h = 1$) with zero per-connection cost ($c_u = 0$), the deployment budget is $B = k$, and the hub capacity $K_\text{max} = m$ and arc capacity $W = m$ are chosen large enough to be non-binding.

Suppose the Set Cover instance has a feasible solution $I = \{S_{j_1}, \ldots, S_{j_{k'}}\}$ with $k' \leq k$ covering all elements of~$U$. Deploying hubs at $\{h_{j_1}, \ldots, h_{j_{k'}}\}$ routes every request~$r_i$ through some hub~$h_j$ with $u_i \in S_j$, yielding zero trusted relays within budget~$B = k$. Conversely, suppose the MHD problem instance admits a zero-TR deployment within budget~$B = k$. Each request $r_i$ is served by some hub~$h_j$, which implies $u_i \in S_j$. The set of activated hubs thus corresponds to a sub-collection of $\mathcal{S}$ of size at most~$k$ that covers all of~$U$.

\end{proof}

\section{Simulation Setup}
\label{sec:setup}

We implement the MILP model in AMPL and solve it using IBM ILOG CPLEX~22.1. We evaluate the model across three metropolitan-scale fiber topologies, three candidate placement strategies, and both deployment models (loss-compensated and loss-uncompensated, as defined in Section~\ref{sec:keyrate}), varying the request load and the minimum key-rate requirement, both individually and jointly.

\subsection{Topologies}
\label{sec:topologies}

We evaluate our MILP on three metropolitan-scale network topologies, summarized in Table~\ref{tab:topologies} and visualized in Fig.~\ref{fig:topologies}, that span the range from dense urban core to large regional infrastructure.

\begin{figure*}[t]
    \centering
    \includegraphics[width=0.93\textwidth]{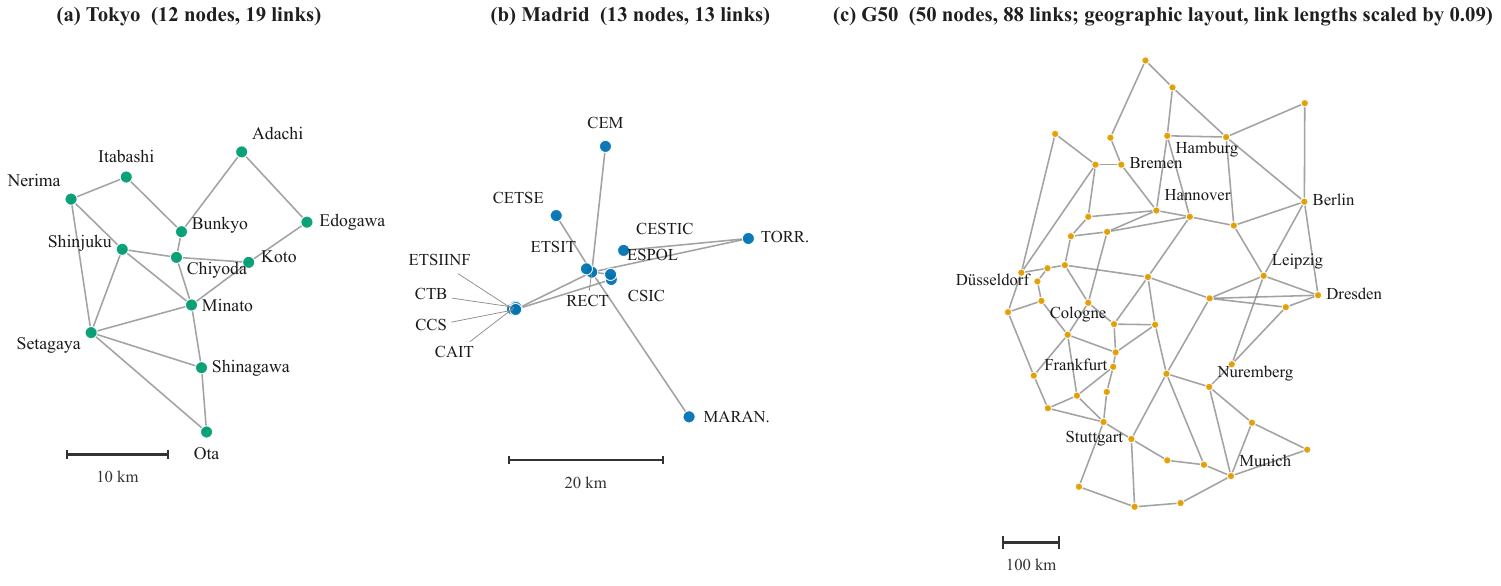}
    \caption{The three evaluation topologies: (a)~Tokyo, (b)~Madrid, (c)~G50. Topology parameters are listed in Table~\ref{tab:topologies}.}
    \label{fig:topologies}
\end{figure*}

\begin{table}[t]
\centering
\caption{Topology parameters for the three evaluation networks.}
\label{tab:topologies}
\begin{tabular}{lccccc}
\toprule
\textbf{Topology} & \textbf{$|V|$} & \textbf{$|A|$} & \textbf{Avg.\ $\ell$} & \textbf{Diam.} & \textbf{Default $B$} \\
 & & & (km) & (km) & \\
\midrule
Tokyo~\cite{Sasaki2011} & 12 & 38 & 8.2 & 30 & 50 \\
Madrid~\cite{Martin2024} & 13 & 26 & 8.7 & 61 & 50 \\
G50~\cite{Orlowski2010} & 50 & 176 & 9.2 & 84 & 50 \\
\bottomrule
\end{tabular}
\end{table}

Table~\ref{tab:topologies} summarizes the main parameters of the three evaluation topologies. Tokyo~\cite{Sasaki2011} and Madrid~\cite{madrid_2025_advancing, Martin2024} adopt the geographic distances of their real metropolitan fiber infrastructures. G50~\cite{Orlowski2010} is a 50-node backbone topology whose link lengths are scaled to metropolitan working distances comparable to the other two topologies~\cite{Alleaume2009}. Throughout this work, the network diameter denotes the largest shortest-path distance between any node pair, measured in fiber kilometers rather than in hops, and therefore corresponds to the longest end-to-end span that a hub placement may have to cover. Tokyo is a dense urban mesh (average degree 3.2, diameter 30\,km) in which all node pairs lie within MDI-QKD reach, serving as a baseline where hub placement alone can eliminate all trusted relays. Madrid is a sparse, near-tree metropolitan network (average degree 2.0, diameter 61\,km) derived from the MadQCI infrastructure, where limited path redundancy makes MDI-QKD coverage particularly sensitive to the deployment model. G50 provides metropolitan-scale link lengths comparable to those of Tokyo and Madrid but at a larger network diameter (84\,km) and 50 nodes, enabling evaluation of scalability; some node pairs are structurally unreachable by MDI-QKD regardless of hub placement, revealing topology-limited performance floors.

\subsection{Simulation Scenarios}
\label{sec:experiments}

Unless otherwise stated, all MILP instances use the baseline parameter values listed in Table~\ref{tab:params}; each scenario below varies one of these parameters while the others remain at their default. The parameter values reflect commercial MDI-QKD practice: $K_\text{max}=3$ matches the multiplexing capacity of deployed shared-relay systems~\cite{Berrevoets2022}; the hub setup cost $c_h$ and per-use cost $c_u$ are normalized so that installing a BSM unit dominates per-connection usage; and the default budget $B=50$ is chosen so that the budget constraint becomes binding at intermediate request loads rather than being trivially slack or globally infeasible.

\begin{table}[t]
\centering
\caption{Default MILP parameters used across all simulation scenarios.}
\label{tab:params}
\begin{tabular}{lll}
\toprule
\textbf{Parameter} & \textbf{Value} & \textbf{Description} \\
\midrule
$c_h$ & 2.0 & BSM hub setup cost \\
$c_u$ & 0.5 & Per-use measurement cost \\
$K_\text{max}$ & 3 & Hub capacity (requests/unit) \\
$\kappa_\text{min}$ & 10\,bps & Min.\ key-rate threshold \\
$B$ & 50 & Default budget \\
$d_\text{max}$ & 250\,km & MDI-QKD reach limit \\
$W$ & 20 & Quantum channel capacity per arc \\
$b_l$ & 0.5\,dB & Bypass loss per node \\
$\alpha, \beta, \gamma$ & 1000, 1, 0.001 & Objective weights \\
\bottomrule
\end{tabular}
\end{table}

We design four simulation scenarios by varying key parameters while keeping all other parameters at their baseline values:
\begin{itemize}
    \item \textbf{ExpA (Key-Rate Sweep):} We vary $\kappa_\text{min} \in \{10,\allowbreak 100,\allowbreak 200,\allowbreak 400,\allowbreak 700,\allowbreak 1000,\allowbreak 1300\}$\,bps with $|\mathcal{R}| = 20$ to study how stricter key-rate thresholds affect relay deployment feasibility across topologies of different diameter.

    \item \textbf{ExpB (Joint Load and Key-Rate Sweep):} We jointly vary $|\mathcal{R}| \in \{5, \ldots, 35\}$ and $\kappa_\text{min} \in \{10,\allowbreak 100,\allowbreak 200,\allowbreak 400,\allowbreak 700,\allowbreak 1000,\allowbreak 1300\}$\,bps to map the operating region of pure MDI-QKD deployment.

    \item \textbf{ExpC (Budget Sweep):} We vary the budget $B \in \{15,\allowbreak 20,\allowbreak \ldots,\allowbreak 50,\allowbreak 60,\allowbreak 75,\allowbreak 100\}$ at $|\mathcal{R}|=35$ and $\kappa_\text{min}=400$\,bps to identify the minimum investment required for a feasible pure MDI-QKD deployment.

    \item \textbf{ExpD (Network-Scale Sweep):} We uniformly scale the G50 link lengths so that the network diameter spans 84 to 252\,km, with $|\mathcal{R}| = 20$ and $\kappa_\text{min} \in \{10,\allowbreak 200,\allowbreak 400,\allowbreak 700,\allowbreak 1000\}$\,bps, to isolate the effect of geographic scale from that of topology structure.
\end{itemize}

Each scenario is executed for all three topologies, three candidate strategies (S1, S2, S3), and both deployment models (loss-compensated, loss-uncompensated), yielding $3 \times 3 \times 2 = 18$ configurations per parameter point.

All instances are solved with a per-instance time limit of 60\,s. Across the evaluated instances, the median solve time is below one second and most complete within a few seconds; the exceptions are large loss-uncompensated instances, whose broad candidate-path sets enlarge the model. The MILP is solved to proven optimality for the majority of feasible instances, confirming that the per-path hub-activation formulation is tractable at metropolitan scale despite the NP-hardness of the underlying problem.

\begin{figure*}[t]
    \centering
    \includegraphics[width=0.94\textwidth]{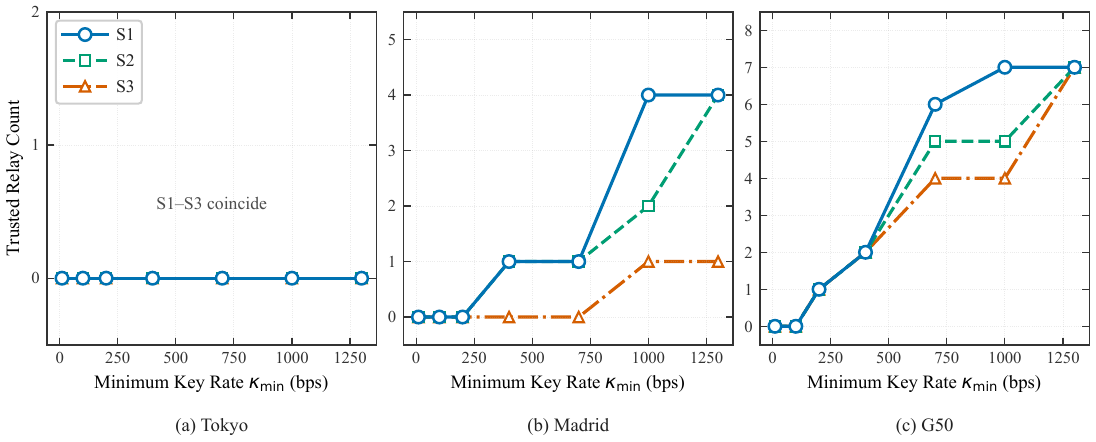}
\caption{Trusted-relay count vs.\ minimum key-rate threshold under three candidate strategies (loss-uncompensated, $|\mathcal{R}|=20$, $B=50$). The onset of nonzero TR shifts with network diameter.}
    \label{fig:threshold}
\end{figure*}
\section{Results and Analysis}
\label{sec:results}

\subsection{Impact of Key-Rate Threshold and Candidate Strategy}
\label{sec:threshold}

Fig.~\ref{fig:threshold} shows how trusted-relay usage grows with the minimum key-rate threshold $\kappa_\text{min}$ under three candidate strategies (S1--S3) for each topology ($|\mathcal{R}|=20$, $B=50$, loss-uncompensated). In Tokyo (panel~a), the compact topology (diameter 30\,km) keeps all node pairs within MDI-QKD reach: all strategies achieve TR\,=\,0 across the entire threshold range up to $\kappa_\text{min}=1300$\,bps, confirming that hub placement alone eliminates all trusted relays in dense metropolitan meshes. In Madrid (panel~b), the sparser structure and larger diameter (61\,km) cause trusted relays to appear at approximately 400\,bps, rising to TR\,=\,4 (S1, S2) and TR\,=\,1 (S3) at $\kappa_\text{min}=1300$. In G50 (panel~c), the largest diameter (84\,km) makes the model most sensitive to $\kappa_\text{min}$: TR begins increasing at $\kappa_\text{min} \approx 200$\,bps and reaches TR\,=\,7 under all three strategies at $\kappa_\text{min}=1300$. The onset of nonzero TR thus shifts monotonically with network diameter, from 1300+\,bps (Tokyo) through $\sim$400\,bps (Madrid) to $\sim$200\,bps (G50). Across all topologies, the ordering S1\,$\geq$\,S2\,$\geq$\,S3 is consistent, confirming that richer candidate placement reduces trusted-relay demand at every operating point.

\subsection{Impact of Network Scale}
\label{sec:scale}

\begin{figure}[t]
    \centering
    \includegraphics[width=0.84\columnwidth]{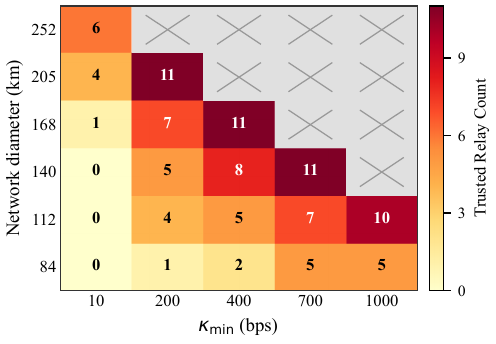}
    \caption{Trusted-relay count for G50 under uniform scaling of all link lengths (loss-uncompensated, S2, $|\mathcal{R}|=20$, $B=50$). Gray crosses ($\times$) mark infeasible configurations.}
    \label{fig:scale}
\end{figure}

The comparison across the three topologies suggests that network diameter is the primary driver of trusted-relay demand, yet the topologies also differ in connectivity structure. To isolate the effect of scale, ExpD uniformly scales all G50 link lengths while keeping the node set, connectivity, and request set fixed, so that only the geographic extent of the network changes. Fig.~\ref{fig:scale} shows the resulting trusted-relay count as the diameter grows from 84 to 252\,km. Trusted-relay usage increases monotonically along both axes, and the infeasibility frontier advances diagonally: at diameters up to roughly 110\,km, all evaluated key-rate targets remain feasible, whereas at larger scales the most demanding targets become progressively infeasible under the given budget. This controlled sweep reproduces, within a single topology, the diameter-governed onset observed across topologies in Fig.~\ref{fig:threshold}, confirming that geographic scale is the dominant driver of the feasibility boundary of pure MDI-QKD deployment. Under loss-compensated deployment the same sweep reaches infeasibility at smaller scales, since the loss-balancing window tightens relative to the growing link lengths, consistent with the behavior observed in Madrid.

\subsection{Impact of Deployment Budget}
\label{sec:budget}
Varying the budget (ExpC, $|\mathcal{R}|=35$, $\kappa_\text{min}=400$\,bps) reveals a sharp transition governed by a critical value $B^*$: below $B^*$, no budget-feasible pure MDI-QKD deployment exists, because the budget cannot fund the BSM units needed to serve all requests via two-link connections; at or above $B^*$, trusted-relay usage immediately saturates at a topology- and model-dependent floor. Tokyo and G50 become feasible at $B^*=45$ under all strategies, after which Tokyo saturates at TR\,=\,0 (its density admits a qualifying hub for every pair) while G50 saturates at TR\,=\,3 (a few node pairs exceed MDI-QKD range regardless of budget). Madrid exposes the cost of loss balancing most clearly: under loss-compensated (Comp.) deployment it requires $B^*=60$ (S1, S2) and saturates at TR\,=\,9 and 8, whereas the loss-uncompensated (Uncomp.) deployment lowers the threshold to $B^*=45$ and the floor to TR\,=\,2, and the richest candidate set (S3) drives Madrid to TR\,=\,3 under Comp.\ and TR\,=\,0 under Uncomp. This infeasible-to-saturation transition shows that pure MDI-QKD deployment requires a minimum infrastructure investment before any service is possible, and that the required investment grows with the loss-balancing constraint in sparse topologies. The choice of a default $B=50$ therefore sits just above $B^*$ for most configurations, keeping the budget constraint operative rather than vacuous.

\subsection{Impact of Deployment Model}
\label{sec:symmetry}

Fig.~\ref{fig:symmetry} maps trusted-relay usage across the ($|\mathcal{R}|$, $\kappa_\text{min}$) operating space under S2 candidate placement, with the key-rate axis extended to 1300\,bps. In Tokyo, both deployment models yield TR\,=\,0 across the entire operating region, confirming that in compact metropolitan meshes, hub placement alone is sufficient. In Madrid, the Comp.\ and Uncomp.\ deployments diverge: under Comp., TR rises with load and becomes infeasible for $|\mathcal{R}|=35$ at all $\kappa_\text{min}$, whereas Uncomp. remains feasible with substantially lower TR throughout, confirming that the loss-balancing constraint is the dominant cost driver in sparse topologies. In G50, both deployment models exhibit similar TR (the dense connectivity makes loss balancing nearly free), and the feasibility boundary is instead driven by high $\kappa_\text{min}$ combined with high load, reflecting the larger network diameter. Across all topologies, relaxing the loss-balancing constraint either reduces trusted-relay usage (Madrid) or has no effect (Tokyo, G50), making it a cost-free or cost-effective planning lever.

\begin{figure*}[t]
    \centering
    \includegraphics[width=0.87\textwidth]{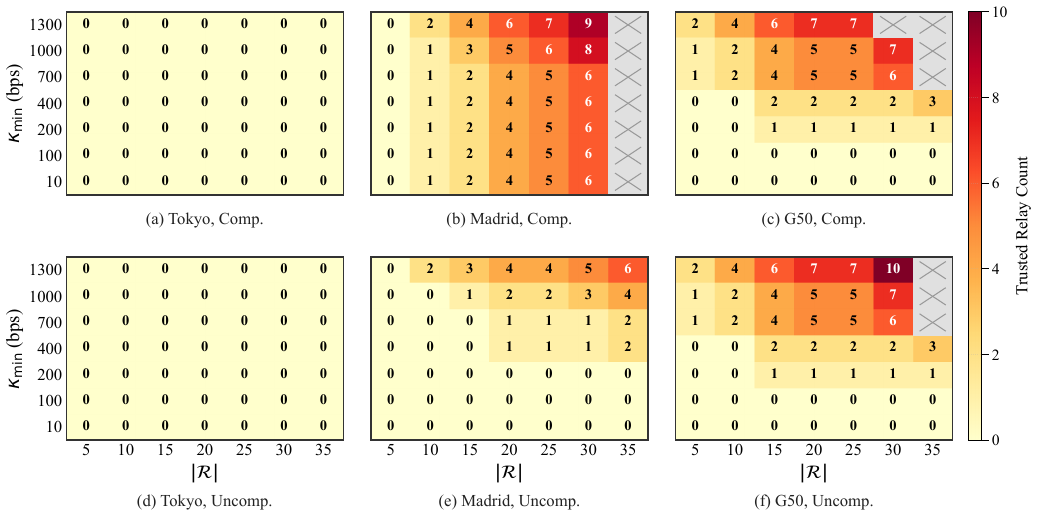}
    \caption{Trusted-relay count under S2 candidate placement ($B=50$). Gray crosses ($\times$) mark infeasible configurations. Top row: loss-compensated; bottom row: loss-uncompensated. The key-rate axis extends to 1300\,bps to capture the full operating region.}
    \label{fig:symmetry}
\end{figure*}

The mechanism is structural. In Madrid, hub candidates satisfying the loss-balancing constraint are scarce: under Comp.\ S2, TR reaches 4 at $|\mathcal{R}|=20$ and $\kappa_\text{min}=10$, whereas Uncomp.\ achieves TR\,=\,0 under the same conditions, eliminating all trusted relays at no additional infrastructure cost. In Tokyo and G50, denser connectivity ensures that near-midpoint hub candidates are almost always available, so loss balancing is effectively free.

\section{Discussion}
\label{sec:discussion}

The results suggest several practical implications for MDI-QKD network planning. First, in compact metropolitan networks such as Tokyo (diameter 30\,km), hub placement alone eliminates all trusted relays regardless of the key-rate threshold or deployment model, confirming that pure MDI-QKD deployment is straightforward at this scale. Second, in sparse topologies such as Madrid, relaxing the loss-balancing constraint eliminates trusted relays that would otherwise be required under loss-compensated deployment, making it the most cost-effective planning lever because it requires no additional infrastructure. In denser topologies, loss balancing is effectively free. Third, increasing candidate density (S1$\to$S3) consistently reduces TR across all topologies, though S3 remains a theoretical reference given its uncaptured new-fiber cost. Finally, the key-rate threshold at which trusted relays first appear shifts monotonically with network diameter, from beyond 1300\,bps in Tokyo to approximately 200\,bps in G50, and the controlled scale sweep of Fig.~\ref{fig:scale} confirms this dependence within a single topology, indicating that QoS targets must be matched to network scale.
Several modeling simplifications should be noted. The key-rate model uses an experimental lookup table~\cite{Berrevoets2022} with asymptotic rates, and tighter security analyses incorporating finite-key effects~\cite{Curty2014} would reduce the achievable rates. Since the key rate enters the model solely through the viability condition $\kappa(d_\text{eff}) \geq \kappa_\text{min}$, a uniform rate reduction by a factor $\eta$ is exactly equivalent to raising the threshold to $\kappa_\text{min}/\eta$; the sensitivity of our results to finite-key corrections can therefore be read directly from the key-rate sweeps in Figs.~\ref{fig:threshold} and~\ref{fig:symmetry}. At the default 10\,bps threshold, the optimal trusted-relay count is unchanged under a tenfold rate reduction across all evaluated topologies and both deployment models. The effective-distance metric $d_\text{eff} = (2/\alpha_\text{dB})\max(L_A, L_B)$ is likewise conservative under loss-uncompensated deployment, since evaluating the key rate at the worse of the two channels over-estimates the loss whenever the two user-to-hub paths differ. In fact, the two deployment models bracket any per-link optimized operating point: the loss-uncompensated rate is achievable today with passive attenuation, while evaluating the key rate at the total loss of the two channels, i.e., $d_\text{eff} = (L_A + L_B)/\alpha_\text{dB}$, upper-bounds what asymmetric MDI-QKD protocols~\cite{Wang2019, Liu2019} could achieve. Re-solving the ExpA and ExpB instances under this upper bound leaves the optimal trusted-relay count unchanged across the evaluated instances with $\kappa_\text{min} \leq 100$\,bps, so the reported results are unaffected by protocol-level optimization in this regime. At more demanding thresholds in the sparse and large-diameter topologies, the bound lowers TR by at most three and marginally extends the feasible region, quantifying the maximum residual benefit of per-link protocol optimization, which we leave for future work.

\section{Conclusion}
\label{sec:conclusion}

In this work, we formalized the MDI-QKD Hub Deployment (MHD) problem, proved its NP-hardness, and proposed a MILP formulation that jointly optimizes BSM-hub placement, two-link user-to-hub routing, and trusted-relay assignment under budget and capacity constraints. Evaluation across three metropolitan topologies (12 to 50 nodes) at realistic geographic distances reveals that topology structure governs trusted-relay demand: compact meshes eliminate all trusted relays through hub placement alone, sparse topologies benefit from relaxing the loss-balancing constraint, and the key-rate threshold at which trusted relays first appear shifts monotonically with network diameter. The formulation delineates the feasibility boundary of pure MDI-QKD deployment, providing actionable guidance for network planners. Future work will extend the model to dynamic demand scenarios and larger networks via scalable heuristics.

\section*{Acknowledgment}

This work was supported in part by funding from the Innovation for Defence Excellence and Security (IDEaS) program from the Department of National Defence (DND); the EU Horizon Europe project ``Quantum Secure Networks Partnership'' (QSNP), grant 101114043; and the ``Hub Nacional de Excelencia en Comunicaciones Cuánticas'' project, funded by Ministerio para la Transformación Digital y de la Función Pública and by the Recovery, Transformation and Resilience Plan -- Funded by the European Union -- NextGenerationEU (PRTR-C16.R1).

\bibliographystyle{IEEEtran}
\IEEEtriggeratref{20}
\bibliography{references}

\begin{thebibliography}{10}
\providecommand{\url}[1]{#1}
\csname url@samestyle\endcsname
\providecommand{\newblock}{\relax}
\providecommand{\bibinfo}[2]{#2}
\providecommand{\BIBentrySTDinterwordspacing}{\spaceskip=0pt\relax}
\providecommand{\BIBentryALTinterwordstretchfactor}{4}
\providecommand{\BIBentryALTinterwordspacing}{\spaceskip=\fontdimen2\font plus
\BIBentryALTinterwordstretchfactor\fontdimen3\font minus \fontdimen4\font\relax}
\providecommand{\BIBforeignlanguage}[2]{{%
\expandafter\ifx\csname l@#1\endcsname\relax
\typeout{** WARNING: IEEEtran.bst: No hyphenation pattern has been}%
\typeout{** loaded for the language `#1'. Using the pattern for}%
\typeout{** the default language instead.}%
\else
\language=\csname l@#1\endcsname
\fi
#2}}
\providecommand{\BIBdecl}{\relax}
\BIBdecl

\bibitem{Bennett1984}
C.~H. Bennett and G.~Brassard, ``Quantum cryptography: Public key distribution and coin tossing,'' in \emph{Proc. IEEE Int. Conf. Comput. Syst. Signal Process.}, Bangalore, India, 1984, pp. 175--179.

\bibitem{Lo2014}
H.-K. Lo, M.~Curty, and K.~Tamaki, ``Secure quantum key distribution,'' \emph{Nature Photon.}, vol.~8, no.~8, pp. 595--604, 2014.

\bibitem{Pirandola2017}
S.~Pirandola, R.~Laurenza, C.~Ottaviani, and L.~Banchi, ``Fundamental limits of repeaterless quantum communications,'' \emph{Nature Commun.}, vol.~8, no. 15043, 2017.

\bibitem{qkd_a_networking}
M.~Mehic, M.~Niemiec, S.~Rass, J.~Ma, M.~Peev, A.~Aguado, V.~Martin, S.~Schauer, A.~Poppe, C.~Pacher, and M.~Voznak, ``Quantum key distribution: A networking perspective,'' \emph{ACM Comput. Surv.}, vol.~53, no.~5, pp. 96:1--96:41, 2020.

\bibitem{Zhao2018}
Y.~Zhao, Y.~Cao, W.~Wang, H.~Wang, X.~Yu, J.~Zhang, M.~Tornatore, Y.~Wu, and B.~Mukherjee, ``Resource allocation in optical networks secured by quantum key distribution,'' \emph{IEEE Commun. Mag.}, vol.~56, no.~8, pp. 130--137, 2018.

\bibitem{Zhang2024RCKTA}
Q.~Zhang, O.~Ayoub, A.~Gatto, J.~Wu, F.~Musumeci, and M.~Tornatore, ``Routing, channel, key-rate, and time-slot assignment for {QKD} in optical networks,'' \emph{IEEE Trans. Netw. Serv. Manag.}, vol.~21, no.~1, pp. 148--160, 2024.

\bibitem{xiong2025power}
J.~Xiong, Q.~Zhang, Y.~Pi{\'e}tri, R.~Yehia, R.~Boutaba, F.~Musumeci, and M.~Tornatore, ``Power consumption analysis of {QKD} networks under different protocols and detector configurations,'' in \emph{Proc. Eur. Conf. Opt. Commun. (ECOC)}, 2025.

\bibitem{web_euroqci}
``{The European Quantum Communication Infrastructure (EuroQCI) Initiative},'' \url{https://digital-strategy.ec.europa.eu/en/policies/european-quantum-communication-infrastructure-euroqci}, 2024.

\bibitem{euroqci_QUID_Italia}
{QUID Consortium}, ``Quid website,'' \url{https://quid-euroqci-italy.eu/}, 2025, accessed: 2025-11-30.

\bibitem{euroqci_spain}
{EuroQCI-Spain Consortium}, ``Euroqci-spain website,'' \url{https://euroqci-spain.eu/}, 2025, accessed: 2025-11-30.

\bibitem{Lo2012}
H.-K. Lo, M.~Curty, and B.~Qi, ``Measurement-device-independent quantum key distribution,'' \emph{Phys. Rev. Lett.}, vol. 108, no. 130503, 2012.

\bibitem{Lydersen2010}
L.~Lydersen, C.~Wiechers, C.~Wittmann, D.~Elser, J.~Skaar, and V.~Makarov, ``Hacking commercial quantum cryptography systems by tailored bright illumination,'' \emph{Nature Photon.}, vol.~4, no.~10, pp. 686--689, 2010.

\bibitem{Braunstein2012}
S.~L. Braunstein and S.~Pirandola, ``Side-channel-free quantum key distribution,'' \emph{Phys. Rev. Lett.}, vol. 108, no. 130502, 2012.

\bibitem{Tang2014}
Y.-L. Tang \emph{et~al.}, ``Measurement-device-independent quantum key distribution over 200\,km,'' \emph{Phys. Rev. Lett.}, vol. 113, no. 190501, 2014.

\bibitem{Tang2016}
Y.-L. Tang, H.-L. Yin, Q.~Zhao, H.~Liu, X.-X. Sun, M.-Q. Huang, W.-J. Zhang, S.-J. Chen, L.~Zhang, L.-X. You, Z.~Wang, Y.~Liu, C.-Y. Lu, X.~Jiang, X.~Ma, Q.~Zhang, T.-Y. Chen, and J.-W. Pan, ``Measurement-device-independent quantum key distribution over untrustful metropolitan network,'' \emph{Phys. Rev. X}, vol.~6, no. 011024, 2016.

\bibitem{Berrevoets2022}
R.~C. Berrevoets \emph{et~al.}, ``Deployed measurement-device independent quantum key distribution and {Bell}-state measurements coexisting with standard internet data and networking equipment,'' \emph{Commun. Phys.}, vol.~5, no. 186, 2022.

\bibitem{qbird_web}
{Q*Bird}, ``{Falqon MDI-QKD Platform},'' \url{https://q-bird.com/mdi-qkd-falqon-series/}, 2025, accessed: 2026-04-24.

\bibitem{Alumur2008}
S.~Alumur and B.~Y. Kara, ``Network hub location problems: The state of the art,'' \emph{Eur. J. Oper. Res.}, vol. 190, no.~1, pp. 1--21, 2008.

\bibitem{OKelly1987}
M.~E. O'Kelly, ``A quadratic integer program for the location of interacting hub facilities,'' \emph{Eur. J. Oper. Res.}, vol.~32, no.~3, pp. 393--404, 1987.

\bibitem{hong1987measurement}
C.-K. Hong, Z.-Y. Ou, and L.~Mandel, ``Measurement of subpicosecond time intervals between two photons by interference,'' \emph{Physical review letters}, vol.~59, no.~18, pp. 2044--2046, 1987.

\bibitem{Wang2019}
W.~Wang, F.~Xu, and H.-K. Lo, ``Asymmetric protocols for scalable high-rate measurement-device-independent quantum key distribution networks,'' \emph{Phys. Rev. X}, vol.~9, no. 041012, 2019.

\bibitem{Liu2019}
H.~Liu \emph{et~al.}, ``Experimental demonstration of high-rate measurement-device-independent quantum key distribution over asymmetric channels,'' \emph{Phys. Rev. Lett.}, vol. 122, no. 160501, 2019.

\bibitem{Cao2021}
Y.~Cao, Y.~Zhao, J.~Li, R.~Lin, J.~Zhang, and J.~Chen, ``Hybrid trusted/untrusted relay-based quantum key distribution over optical backbone networks,'' \emph{IEEE J. Sel. Areas Commun.}, vol.~39, no.~9, pp. 2701--2718, 2021.

\bibitem{Jia2023}
J.~Jia, B.~Dong, L.~Kang, H.~Xie, and B.~Guo, ``Cost-optimization-based quantum key distribution over quantum key pool optical networks,'' \emph{Entropy}, vol.~25, no.~4, p. 661, 2023.

\bibitem{He2025}
J.~He, Y.~Zhou, S.~Xie, and C.~Wang, ``A cost-optimal quantum key distribution based on hybrid trusted relays,'' in \emph{Proc. BDCTA}, 2025, pp. 130--138.

\bibitem{Rabbie2022}
J.~Rabbie, K.~Chakraborty, G.~Avis, and S.~Wehner, ``Designing quantum networks using preexisting infrastructure,'' \emph{npj Quantum Inf.}, vol.~8, no.~5, 2022.

\bibitem{Pouryousef2024}
S.~Pouryousef, H.~Shapourian, A.~Shabani, R.~Kompella, and D.~Towsley, ``Resource placement for rate and fidelity maximization in quantum networks,'' \emph{IEEE Trans. Quantum Eng.}, vol.~5, pp. 1--16, 2024.

\bibitem{boston}
C.~Elliott, A.~Colvin, D.~Pearson, O.~Pikalo, J.~Schlafer, and H.~Yeh, ``Current status of the {DARPA} quantum network,'' in \emph{Quantum Information and Computation III (SPIE)}, vol. 5815, 2005, pp. 138--149.

\bibitem{Peev2009}
M.~Peev \emph{et~al.}, ``The {SECOQC} quantum key distribution network in {Vienna},'' \emph{New J. Phys.}, vol.~11, no. 075001, 2009.

\bibitem{Sasaki2011}
M.~Sasaki \emph{et~al.}, ``Field test of quantum key distribution in the {Tokyo} {QKD} network,'' \emph{Opt. Express}, vol.~19, no.~11, pp. 10\,387--10\,409, 2011.

\bibitem{Chen2021}
Y.-A. Chen \emph{et~al.}, ``An integrated space-to-ground quantum communication network over 4,600 kilometres,'' \emph{Nature}, vol. 589, pp. 214--219, 2021.

\bibitem{madrid_2025_advancing}
A.~Sebasti{\'a}n-Lombra{\~n}a, L.~Ortiz, J.~P. Brito, J.~Faba, R.~B. M{\'e}ndez, J.~S. {De Buruaga}, R.~J. Vicente, J.~Setien, J.~J. Romero, C.~Escribano, P.~Salas, J.~L. Bejarano, and V.~Mart{\'i}n, ``Advancing the future of quantum communication networks: the new {MadQCI},'' in \emph{Proc. 25th Int. Conf. Transparent Opt. Netw. (ICTON)}, 2025, pp. 1--5.

\bibitem{Zhang2023ICQKD}
Q.~Zhang, O.~Ayoub, J.~Wu, X.~Lin, and M.~Tornatore, ``{IC-QKD}: An information-centric quantum key distribution network,'' \emph{IEEE Commun. Mag.}, vol.~61, no.~12, pp. 148--154, 2023.

\bibitem{Cirigliano2024}
L.~Cirigliano, V.~Brosco, C.~Castellano, C.~Conti, and L.~Pilozzi, ``Optimal quantum key distribution networks: Capacitance versus security,'' \emph{npj Quantum Inf.}, vol.~10, no.~44, 2024.

\bibitem{Amer2020}
O.~Amer, W.~O. Krawec, and B.~Wang, ``Efficient routing for quantum key distribution networks,'' in \emph{Proc. IEEE Int. Conf. Quantum Comput. Eng. (QCE)}, 2020, pp. 137--147.

\bibitem{Krawec2024}
W.~O. Krawec, B.~Wang, and R.~Brown, ``Finite key security of simplified trusted node networks,'' in \emph{Proc. IEEE Int. Conf. Quantum Comput. Eng. (QCE)}, 2024, pp. 1777--1787.

\bibitem{Marik2025}
A.~K. Marik, B.~Palit, and S.~Behera, ``Trusted repeater placement in {QKD}-enabled optical networks,'' \emph{arXiv preprint arXiv:2509.10338}, 2025.

\bibitem{Avis2024}
G.~Avis, R.~Knegjens, A.~S. S{\o}rensen, and S.~Wehner, ``Asymmetric node placement in fiber-based quantum networks,'' \emph{Phys. Rev. A}, vol. 109, no. 052627, 2024.

\bibitem{Karp1972}
R.~M. Karp, ``Reducibility among combinatorial problems,'' in \emph{Complexity of Computer Computations}, R.~E. Miller, J.~W. Thatcher, and J.~D. Bohlinger, Eds.\hskip 1em plus 0.5em minus 0.4em\relax Plenum Press, 1972, pp. 85--103.

\bibitem{Martin2024}
V.~Martin \emph{et~al.}, ``{MadQCI}: A heterogeneous and scalable {SDN} {QKD} network deployed in production facilities,'' \emph{npj Quantum Inf.}, vol.~10, no.~80, 2024.

\bibitem{Orlowski2010}
S.~Orlowski, R.~Wess{\"a}ly, M.~Pi{\'o}ro, and A.~Tomaszewski, ``{SNDlib} 1.0---survivable network design library,'' \emph{Networks}, vol.~55, no.~3, pp. 276--286, 2010.

\bibitem{Alleaume2009}
R.~All{\'e}aume, F.~Roueff, E.~Diamanti, and N.~L{\"u}tkenhaus, ``Topological optimization of quantum key distribution networks,'' \emph{New J. Phys.}, vol.~11, no. 075002, 2009.

\bibitem{Curty2014}
M.~Curty, F.~Xu, W.~Cui, C.~C.~W. Lim, K.~Tamaki, and H.-K. Lo, ``Finite-key analysis for measurement-device-independent quantum key distribution,'' \emph{Nat. Commun.}, vol.~5, no. 3732, 2014.

\end{thebibliography}

\end{document}